\pgfplotsset{compat=newest}
\newcommand{\expect}[1]{{\mathbb{E}}\left[#1\right]}
\newcommand{\expcnd}[2]{{\mathbb{E}}\left[ #1 \;\middle|\; #2\right]}
\newcommand{\lr}[1]{\left( #1 \right)}
\newcommand{\bbC}{\mathbb{C}}
\newcommand{\rmd}{\mathrm{d}}
\newcommand{\bbE}{\mathbb{E}}\newcommand{\rme}{\mathrm{e}}
\newcommand{\bbN}{\mathbb{N}}\newcommand{\rmN}{\mathrm{N}}
\newcommand{\bbP}{\mathbb{P}}
\newcommand{\bbR}{\mathbb{R}}
\newcommand{\rmZ}{\mathrm{Z}}
\newcommand{\sfA}{\mathsf{A}}
\newcommand{\sfB}{\mathsf{B}}
\newcommand{\sfN}{\mathsf{N}}
\newcommand{\sfR}{\mathsf{R}}
\newcommand{\cA}{\mathcal{A}}
\newcommand{\cD}{\mathcal{D}}
\newcommand{\D}{D}
\newcommand{\kl}[2]{{\D}\left(\left.#1 \, \right\| #2 \right)}
\newcommand{\supp}{{\mathsf{supp}}}
\theoremstyle{mystyle}
\newtheorem{theorem}{Theorem}
\theoremstyle{mystyle}
\newtheorem{lemma}{Lemma}
\theoremstyle{mystyle}
\newtheorem{prop}{Proposition}
\theoremstyle{mystyle}
\theoremstyle{mystyle}
\theoremstyle{remark}
\theoremstyle{mystyle}
\theoremstyle{mystyle}
\theoremstyle{mystyle}
\theoremstyle{discussion}
\theoremstyle{mystyle}
\theoremstyle{mystyle}
\begin{document}

\title{Improved Bounds on the Number of Support Points of the Capacity-Achieving Input for Amplitude Constrained Poisson Channels}

\author{
\IEEEauthorblockN{ Luca Barletta$^{*}$, Alex Dytso$^{\dagger}$, and Shlomo Shamai (Shitz)$^{**}$}
$^{*}$ Politecnico di Milano, Milano, 20133, Italy. Email: luca.barletta@polimi.it \\
$^{\dagger}$ Qualcomm Flarion Technologies, Bridgewater,  NJ 08807, USA.
Email: odytso2@gmail.com \\
$^{**}$  Technion -- Israel Institute of Technology, Haifa 32000, Israel. E-mail: sshlomo@ee.technion.ac.il}

\maketitle
\begin{abstract}
This work considers a discrete-time Poisson noise channel with an input amplitude constraint $\sfA$ and a dark current parameter $\lambda$. It is known that the capacity-achieving distribution for this channel is discrete with finitely many points. Recently, for $\lambda=0$, a lower bound of order $\sqrt{\sfA}$ and an upper bound of order $\sfA \log^2(\sfA)$ have been demonstrated on the cardinality of the support of the optimal input distribution.

In this work, we improve these results in several ways. First, we provide upper and lower bounds that hold for non-zero dark current. Second, we produce a sharper upper bound with a far simpler technique. In particular, for $\lambda=0$, we sharpen the upper bound from the order of $\sfA \log^2(\sfA)$ to the order of $\sfA$. Finally, some other additional information about the location of the support is provided. 
\end{abstract}
\section{Introduction}
The conditional
probability mass function (pmf) of the output random variable
$Y$ given the input $X$ that specifies the discrete-time Poisson channel is given by:
\begin{equation}
P_{Y|X}(y|x) =\frac{1}{y!} (x+\lambda)^y \rme^{-(x+\lambda) }, \, x\ge 0, \lambda \ge 0,  y \in \bbN_0.  \label{eq:Poisson_Channel}
\end{equation}
In the above, we used the convention that $0^0 =1$.  The parameter $\lambda  \ge 0$ is known as the \emph{dark current}. 

The goal of this work is to study the capacity of the channel in \eqref{eq:Poisson_Channel} where the input is subject to an amplitude constraint, that is 
\begin{equation}
    C(\sfA, \lambda) =\max_{ P_X :\: 0 \le X \le \sfA } I(X;Y). \label{eq:Cap_def}
\end{equation}
More specifically, we are interested in investigating the structure of the \emph{capacity-achieving distribution}, which will be denoted by $P_{X^\star}$. 

\subsection{Background}
\label{subsec:background}
The discrete-time Poisson noise channel is suited to model low intensity, direct detection optical communication channels \cite{gordon1962quantum} and molecular communications based on particle-intensity \cite{farsad2020capacities}.  In this work, we are concerned with the discrete-time channel; however, there
also exists a large literature on continuous-time channels, and the interested reader is referred to a survey
\cite{verdu1999poisson}.

The initial substantial investigation into the capacity-achieving distribution for the Poisson channel was conducted in \cite{mceliece1979practical}, where it was shown that when the dark current is absent ($\lambda=0$) the support of an optimal input distribution for any $\sfA$ can contain
at most one mass point in the interval $(0, 1)$. Moreover, for any $\sfA < 1$, it was shown that the optimal
input distribution consists of two mass points at $0$ and $\sfA$. In \cite{shamai1990capacity}, it was shown that the optimizing input distribution
is unique and discrete with finitely many mass points. Critically, the result in \cite{shamai1990capacity} was also shown for non-zero dark current. For other relevant investigations from properties of the capacity-achieving distribution the interested reader is referred to \cite{cao2014capacity,cao2014capacityPart2}. 

It is important to note that original techniques,  for showing discreetness of the input distribution, which were developed in the context of the additive Gaussian noise channel in \cite{smith1971information}, are typically non-constructive and cannot provide  additional information about the size or location of the support. 
Recently, in \cite{dytso2021properties}, for the case of $\lambda =0$,  the results of \cite{shamai1990capacity} were sharpened and firm bounds on the support size have been provided. Specifically, \cite{dytso2021properties} derived a lower bound of order $\sqrt{\sfA}$  and an upper bound of order $\sfA \log^2(\sfA)$. The proof of the upper bound in \cite{dytso2021properties} relied on the oscillation theorem due to Karlin \cite{karlin1957polya} and several additional rather involved inequalities. This work takes a different approach which has several advantages. First, the proof is simpler. Second, the proof generalizes to the case of non-zero dark current. Third, the  resulting bounds are tighter. 

Capacity upper bound have also received considerable attention and can be found in \cite{martinez2007spectral,martinez2008achievability,lapidoth2011discrete,wang2014refined,wang2014impact, mceliece1979practical,verdu1990asymptotic,martinez2007spectral,cheraghchi2018improved,cheraghchi2020non,lapidoth2009capacity}. 

\subsection{Organization}
\label{subsec:organization}
The remainder of the paper is organized as follows. 
Sec.~\ref{sec:preliminaries} defines notation, reviews the Karush-Kuhn-Tucker (KKT) conditions, derives some useful estimation theoretic identities, proposes bounds on a ratio of cumulants, and presents Tijdeman's lemma.  
Sec.~\ref{sec:main_result} presents our main results, whose proof is provided in Sec.~\ref{sec:proofs}. 
Finally, Sec.~\ref{sec:conclusion} concludes the paper.

\section{Preliminaries }
\label{sec:preliminaries}
\subsection{Notation}
All logarithms are to the base $\rme$. Deterministic scalar quantities are denoted by lower-case letters and random variables are denoted by uppercase letters.  
For a random variable $X$ and every measurable subset $\cA \subseteq \bbR$ the probability distribution is written as $P_{X}(\cA) = \bbP[X \in \cA]$. The support set of $P_X$ is
\begin{align}
\supp(P_{X})=\{x:&  \text{ $P_{X}( \mathcal{D})>0$ for every open set $ \mathcal{D} \ni x $ } \}. \nonumber
\end{align} 
When $X$ is discrete, we write $P_X(x)$ for $P_X(\{x\})$, i.e., $P_X$ is a probability mass function (pmf). The relative entropy of the distributions $P$ and $Q$ is $\kl{P}{Q}$. 

Given a function $f: \bbC \mapsto \bbC$ and a set $\cA \subseteq \bbC$, define the set of zeros of $f$ in $\cA$ as
\begin{equation}
\rmZ\left(\cA; f\right) =    \{z : f(z) =0 \} \cap \cA.  
\end{equation}
We denote the cardinality of $\rmZ\left(\cA; f\right)$ by $\sfN\left(\cA; f\right)$. 

\subsection{KKT Equations}
The key to studying properties of the optimal input distribution are the following KKT conditions which relate the support of $P_{X^\star}$ to finding zeros of a certain function defined through the relative entropy \cite{shamai1990capacity}. 
\begin{lemma}\label{lem:KKT}
    $P_{X^\star}$ maximizes \eqref{eq:Cap_def} if and only if
    \begin{align}
        i(x; P_{X^\star}) &\le C(\sfA,\lambda), \, x \in [0, \sfA] \\
        i(x; P_{X^\star}) &= C(\sfA,\lambda), \, x \in \supp(P_{X^\star}) \label{eq:KKT_equality_condition}
    \end{align}
    where
    \begin{equation}
    i(x; P_{X^\star}) = \kl{P_{Y|X}(\cdot|x)}{P_{Y^\star}}. 
    \end{equation}
\end{lemma}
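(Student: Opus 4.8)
The plan is to read \eqref{eq:Cap_def} as the maximization of the concave functional $P_X \mapsto I(X;Y)$ over the convex, weak-$*$ compact set of probability measures supported on $[0,\sfA]$, and to obtain the two stated conditions as the first-order (variational) optimality conditions, which for a concave objective are simultaneously necessary and sufficient. Existence of a maximizer $P_{X^\star}$ is already available from \cite{shamai1990capacity}, so I take it for granted. Two preliminary facts I would record first: (i) since every Poisson pmf $P_{Y|X}(\cdot|x)$ is strictly positive on $\bbN_0$ and $P_{X^\star}$ has a mass point $x_0$ with $x_0+\lambda>0$, the induced pmf satisfies $P_{Y^\star}(y)\ge P_{X^\star}(\{x_0\})\,P_{Y|X}(y|x_0)>0$, so that the log-ratio $\log\frac{P_{Y|X}(y|x)}{P_{Y^\star}(y)}$ grows at most linearly in $y$ while $P_{Y|X}(y|x)$ decays super-exponentially; hence $i(x;P_{X^\star})<\infty$ for every $x\in[0,\sfA]$ and, by a Weierstrass-type argument on the compact interval $[0,\sfA]$, $x\mapsto i(x;P_{X^\star})$ is continuous; (ii) $C(\sfA,\lambda)=I(P_{X^\star})=\int_0^{\sfA} i(x;P_{X^\star})\,dP_{X^\star}(x)<\infty$.

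The core computation is the standard mutual-information perturbation identity. Fix a feasible $P_X$, put $P_\theta=(1-\theta)P_{X^\star}+\theta P_X$ for $\theta\in[0,1]$, and let $P_{Y,\theta}$ be its output pmf. Writing $I(P_\theta)=\int i(x;P_\theta)\,dP_\theta(x)$ and expanding $i(x;P_\theta)=\sum_y P_{Y|X}(y|x)\big(\log P_{Y|X}(y|x)-\log P_{Y,\theta}(y)\big)$, I differentiate at $\theta=0^+$. Differentiating the outer mixing measure produces $\int i(x;P_{X^\star})\,dP_X(x)-\int i(x;P_{X^\star})\,dP_{X^\star}(x)$; differentiating $P_{Y,\theta}$ inside the logarithm produces $-\sum_y \frac{1}{P_{Y^\star}(y)}\Big(\frac{d}{d\theta}P_{Y,\theta}(y)\Big)\Big|_{\theta=0}\int P_{Y|X}(y|x)\,dP_{X^\star}(x)=-\sum_y \Big(\frac{d}{d\theta}P_{Y,\theta}(y)\Big)\Big|_{\theta=0}=-\frac{d}{d\theta}\Big|_{\theta=0}\sum_y P_{Y,\theta}(y)=0$, since $\sum_y P_{Y,\theta}(y)\equiv 1$. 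Therefore
\[
\left.\frac{d}{d\theta}\right|_{\theta=0^+} I(P_\theta)=\int_0^{\sfA} i(x;P_{X^\star})\,dP_X(x)-C(\sfA,\lambda).
\]
The one place that needs care is justifying the interchange of differentiation and summation (and of summation and integration against $P_{X^\star}$): I would supply a dominating summable bound, uniform for $\theta$ in a neighbourhood of $0$, using the linear-in-$y$ growth of the log-ratios together with the factorial decay of the Poisson pmfs, which is exactly the preliminary estimate (i).

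With the identity in hand, concavity of $P_X\mapsto I(P_X)$ gives that $P_{X^\star}$ is optimal if and only if the right-hand side above is $\le 0$ for every feasible $P_X$, i.e. $\int_0^{\sfA} i(x;P_{X^\star})\,dP_X(x)\le C(\sfA,\lambda)$ for every probability measure $P_X$ on $[0,\sfA]$. Specializing to $P_X=\delta_x$ yields $i(x;P_{X^\star})\le C(\sfA,\lambda)$ for all $x\in[0,\sfA]$, which is the first asserted condition; conversely this pointwise bound integrates back, so the two formulations are equivalent. For the equality on the support, combine the pointwise bound with fact (ii): $\int_0^{\sfA}\big(C(\sfA,\lambda)-i(x;P_{X^\star})\big)\,dP_{X^\star}(x)=0$ with a nonnegative integrand forces $i(x;P_{X^\star})=C(\sfA,\lambda)$ for $P_{X^\star}$-almost every $x$, and continuity of $x\mapsto i(x;P_{X^\star})$ from preliminary fact (i) upgrades this to every $x\in\supp(P_{X^\star})$, establishing \eqref{eq:KKT_equality_condition}. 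The only genuine obstacle in the whole argument is the analytic bookkeeping behind the perturbation identity --- differentiation under the sum and the vanishing of the output-derivative term --- which the integrability and continuity facts recorded at the outset are designed to handle.
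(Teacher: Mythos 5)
The paper does not actually prove Lemma~\ref{lem:KKT}: it is imported verbatim from \cite{shamai1990capacity} (the same conditions go back to Smith's treatment of the amplitude-constrained Gaussian channel), so there is no in-paper argument to compare against. Your proposal reconstructs the standard variational proof behind that citation --- concavity and weak-$*$ compactness, the one-sided derivative of $I\lr{(1-\theta)P_{X^\star}+\theta P_X}$ at $\theta=0^+$, the vanishing of the output-perturbation term by normalization, reduction to point masses $\delta_x$, and the ``zero integral of a nonnegative integrand plus continuity'' step for equality on the support --- and this is the right argument; your preliminary estimates are also sound, since the $1/y!$ cancels in $\log\frac{P_{Y|X}(y|x)}{P_{Y^\star}(y)}$, making the log-ratio genuinely linear in $y$ against the factorial decay of $P_{Y|X}(\cdot|x)$, which justifies the dominated-convergence interchanges and the continuity of $x\mapsto i(x;P_{X^\star})$.

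One point deserves tightening: the directional-derivative criterion for a concave functional characterizes optimality via $\int i(x;P_{X^\star})\,dP_X(x)\le I(P_{X^\star})$, not via $C(\sfA,\lambda)$, and identifying the two constants is legitimate only once optimality is known. In the necessity direction this is harmless, but in the sufficiency direction your sentence ``this pointwise bound integrates back'' quietly uses $I(P_{X^\star})=C(\sfA,\lambda)$ before it is established; the clean fix is to first integrate \eqref{eq:KKT_equality_condition} against $P_{X^\star}$ to get $I(P_{X^\star})=C(\sfA,\lambda)$ (so $P_{X^\star}$ attains the supremum and the converse is immediate), or equivalently to run your concavity argument with $I(P_{X^\star})$ in place of $C(\sfA,\lambda)$ throughout. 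Similarly, your standing assumption that $P_{X^\star}$ has a mass point $x_0$ with $x_0+\lambda>0$ should be justified in both directions (in the sufficiency direction, a point mass at $0$ with $\lambda=0$ would make $i(x;P_{X^\star})=\infty$ for $x>0$ and is excluded by the hypothesized inequality). These are presentational repairs, not gaps in the method.
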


\subsection{Estimation Theoretic Identities}
Next, we present several  derivative identities between  relative entropy and estimation theoretic quantities such as the conditional mean and conditional cumulant, which will be useful in our proofs. 

    The \emph{conditional cumulant generating function} is defined as
    \begin{align}
    K_{X|Y=y}(t)& =\log \bbE[ \rme^{tX}|Y =y]\\
    &=\log \bbE[ (X+\lambda)^y\rme^{-X(1-t)-\lambda}] + \log(y! P_Y(y) ).
    \end{align}

    The first conditional cumulant is
    \begin{align}
    \kappa_1(X|Y=y) &=  \frac{\rmd}{\rmd t} K_{X|Y=y}(t) 
 \Big |_{t=0}  \\
 &=\frac{\bbE[X(X+\lambda)^{y} \rme^{-X-\lambda}]}{\bbE[(X+\lambda)^{y} \rme^{-X-\lambda}]} \\
    &= \frac{\expect{(X+\lambda)^{y+1}\rme^{-(X+\lambda)}}}{\expect{(X+\lambda)^{y}\rme^{-(X+\lambda)}}}-\lambda.
    \end{align}

We next show some useful derivative identities.    
\begin{lemma}\label{lem:Derivatives_K}
    Let 
    \begin{equation}
    i(x; P_X) = G(x) +(x+\lambda) \log (x+\lambda) -(x+\lambda), \, x \ge 0
    \end{equation}
Then,
\begin{itemize}
\item The function $G$ and its first and second derivative are given by 
\begin{align}
    G(x) &=  \sum_{y=0}^\infty P_{Y|X}(y|x) \log  \frac{1}{y! P_{Y}(y)}, \,  x \ge 0 \\
    G'(x) &= \sum_{y=0}^\infty P_{Y|X}(y|x) \log  \frac{1}{\bbE[X+\lambda|Y =y]  }, \, x>0 
    \end{align}
    \begin{align}
    &G''(x) 
 = \sum_{y=0}^\infty P_{Y|X}(y|x) \log  \frac{\bbE^2[X +\lambda|Y =y]  }{\bbE[ (X+\lambda)^2|Y =y]  } \\
& = \sum_{y=0}^\infty P_{Y|X}(y|x) \log  \frac{\bbE[X+\lambda|Y =y]  }{\bbE[ X+\lambda|Y =y+1]  }  \\
&=\sum_{y=0}^\infty P_{Y|X}(y|x)  \log\frac{\kappa_1(X|Y=y)+\lambda}{\kappa_1(X|Y=y+1)+\lambda}, \, x>0.
\end{align}
\item The second derivative of $i(x;P_X)$ is given by 
\begin{align}
    i''(x; P_X) &= \sum_{y=0}^\infty P_{Y|X}(y|x) \log  \frac{x+\lambda  }{\bbE[ X+\lambda|Y =y+1]  } \notag\\
   & \quad -i'(x; P_X)  +\frac{1}{x+\lambda},\qquad x  >0.  \label{eq:second_deriavtive_i_den}
\end{align}
\end{itemize}
\end{lemma}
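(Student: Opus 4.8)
The plan is to derive every identity by starting from $i(x;P_X)=\kl{P_{Y|X}(\cdot|x)}{P_Y}$, expanding the Poisson log-likelihood
\[
\log P_{Y|X}(y|x)=-\log(y!)+y\log(x+\lambda)-(x+\lambda),
\]
and then differentiating the resulting series term by term in $x$. Two elementary facts carry the whole argument. The first is the Poisson shift identity
\[
\frac{\del}{\del x}P_{Y|X}(y|x)=P_{Y|X}(y-1|x)-P_{Y|X}(y|x),\qquad x>0,
\]
with the convention $P_{Y|X}(-1|x)=0$, which is immediate from \eqref{eq:Poisson_Channel}. The second is the Bayes/Tweedie identity $\bbE[X+\lambda\mid Y=y]=(y+1)P_Y(y+1)/P_Y(y)=\kappa_1(X|Y=y)+\lambda$, obtained by observing that $(y+1)P_Y(y+1)=\bbE\bigl[\tfrac1{y!}(X+\lambda)^{y+1}\rme^{-(X+\lambda)}\bigr]$ and $P_Y(y)=\bbE\bigl[\tfrac1{y!}(X+\lambda)^{y}\rme^{-(X+\lambda)}\bigr]$; the same computation with $(X+\lambda)^2$ gives $\bbE[(X+\lambda)^2\mid Y=y]=(y+1)(y+2)P_Y(y+2)/P_Y(y)$.

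\noindent\textbf{Decomposition of $i$ and the derivatives of $G$.} Plugging the expansion of $\log P_{Y|X}(y|x)$ into $\sum_y P_{Y|X}(y|x)\log\tfrac{P_{Y|X}(y|x)}{P_Y(y)}$ and using $\sum_y y\,P_{Y|X}(y|x)=\bbE[Y\mid X=x]=x+\lambda$ (the mean of a $\mathrm{Poisson}(x+\lambda)$ variable) peels off exactly $(x+\lambda)\log(x+\lambda)-(x+\lambda)$ and leaves $G(x)=\sum_y P_{Y|X}(y|x)\log\tfrac1{y!\,P_Y(y)}$. Writing $G(x)=\sum_y P_{Y|X}(y|x)\,c_y$ with $x$-independent coefficients $c_y=\log\tfrac1{y!P_Y(y)}$, the shift identity together with a reindexing of the shifted sum gives $G'(x)=\sum_y P_{Y|X}(y|x)(c_{y+1}-c_y)=\sum_y P_{Y|X}(y|x)\log\tfrac{P_Y(y)}{(y+1)P_Y(y+1)}$, which is $\sum_y P_{Y|X}(y|x)\log\tfrac1{\bbE[X+\lambda|Y=y]}$ by the Bayes/Tweedie identity. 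Repeating this step with $c_y$ replaced by $-\log\bbE[X+\lambda|Y=y]$ yields the telescoped form $G''(x)=\sum_y P_{Y|X}(y|x)\log\tfrac{\bbE[X+\lambda|Y=y]}{\bbE[X+\lambda|Y=y+1]}$; substituting the two Bayes/Tweedie identities shows this agrees with the form involving $\bbE^2[X+\lambda|Y=y]/\bbE[(X+\lambda)^2|Y=y]$ (both sides collapse to $\tfrac{(y+1)P_Y(y+1)^2}{(y+2)P_Y(y)P_Y(y+2)}$), and the remaining form is the mere restatement $\kappa_1(X|Y=y)+\lambda=\bbE[X+\lambda|Y=y]$.

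\noindent\textbf{The formula for $i''$.} Differentiating the decomposition termwise gives $i'(x;P_X)=G'(x)+\log(x+\lambda)$ and $i''(x;P_X)=G''(x)+\tfrac1{x+\lambda}$. Splitting the logarithm in the telescoped form of $G''$ and recognizing that $-G'(x)=\sum_y P_{Y|X}(y|x)\log\bbE[X+\lambda|Y=y]$, we obtain $G''(x)=-G'(x)-\sum_y P_{Y|X}(y|x)\log\bbE[X+\lambda|Y=y+1]$. Substituting $-G'(x)=-\,i'(x;P_X)+\log(x+\lambda)$ and absorbing $\log(x+\lambda)$ into the sum yields exactly \eqref{eq:second_deriavtive_i_den}.

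\noindent\textbf{Main difficulty.} The only genuinely non-algebraic point is justifying the term-by-term differentiation of the infinite series and the finiteness of $G$, $G'$, $G''$. This is routine: on any compact subinterval of $(0,\infty)$ the $x$-derivatives of $P_{Y|X}(y|x)$ are bounded, uniformly in $x$, by a summable sequence, since the Poisson weights decay like $1/y!$ while the coefficients $c_y$ and $\log\bbE[X+\lambda|Y=y]$ grow at most like $y\log y$ (the amplitude bound $X\le\sfA$ controls $P_Y(y)$ from above, and any mass $P_X$ places away from a point of degeneracy controls it from below), so dominated convergence lets us differentiate under the sum. The argument also uses $P_Y(y)>0$ for every $y$, which holds for all $P_X$ when $\lambda>0$, and for the capacity-achieving input (which is not a point mass at $0$) when $\lambda=0$.
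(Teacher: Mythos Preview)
Your proposal is correct and follows essentially the same approach as the paper: both use the Poisson shift/derivative identity $\tfrac{\rmd}{\rmd x}\bbE[f(Y)\mid X=x]=\bbE[f(Y+1)-f(Y)\mid X=x]$ together with the Bayes/Tweedie (a.k.a.\ Turing) identity $\bbE[X+\lambda\mid Y=y]=(y+1)P_Y(y+1)/P_Y(y)$ to pass from $G$ to $G'$ to $G''$, and then combine with the elementary derivatives of $(x+\lambda)\log(x+\lambda)-(x+\lambda)$ to obtain \eqref{eq:second_deriavtive_i_den}. You additionally justify term-by-term differentiation via dominated convergence, a technical point the paper leaves implicit.
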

\begin{proof}
  See Appendix~\ref{app:lem:Derivatives_K}.
\end{proof}

We conclude this part presenting bounds on the ratio of conditional cumulants. 

\begin{lemma}\label{lem:bound_cumulant}
    For all $\sfA>0$ and $y \in \bbN_0$ we have
    \begin{equation}
        \frac{\kappa_1(X^\star|Y=y)+\lambda}{\kappa_1(X^\star|Y=y+1)+\lambda} \le 1, \qquad \lambda\ge 0,
    \end{equation}
    and 
    \begin{align}
         \frac{\kappa_1(X^\star|Y=y)+\lambda}{\kappa_1(X^\star|Y=y+1)+\lambda} \ge \left\{\begin{array}{cc}
            \frac{\lambda}{\sfA+\lambda} & \lambda>0, \\
         \frac{\rme^{-\sqrt{2(\log(\sfA)-1)}}}{\sfA}    & \lambda=0.
        \end{array} \right.
    \end{align}
\end{lemma}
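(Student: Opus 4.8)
The plan is to recast everything in terms of the unnormalized moments $m_k:=\bbE[(X^\star+\lambda)^k\rme^{-(X^\star+\lambda)}]=k!\,P_{Y^\star}(k)$. By the identity for $\kappa_1$ recorded above, $\kappa_1(X^\star|Y=y)+\lambda=m_{y+1}/m_y$, so the quantity to be bounded is exactly $m_{y+1}^2/(m_ym_{y+2})$. The upper bound $\le1$ is then immediate from the log-convexity of $(m_k)_{k\ge0}$: writing $(X^\star+\lambda)^{y+1}\rme^{-(X^\star+\lambda)}=\big[(X^\star+\lambda)^{y}\rme^{-(X^\star+\lambda)}\big]^{1/2}\big[(X^\star+\lambda)^{y+2}\rme^{-(X^\star+\lambda)}\big]^{1/2}$ and applying Cauchy--Schwarz gives $m_{y+1}^2\le m_ym_{y+2}$. (This makes each summand in the formula for $G''$ in Lemma~\ref{lem:Derivatives_K} nonpositive.)

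For the lower bound, the same log-convexity shows that $y\mapsto m_{y+1}/m_y=\kappa_1(X^\star|Y=y)+\lambda$ is nondecreasing, while $\kappa_1(X^\star|Y=y+1)+\lambda=\bbE[X^\star+\lambda\mid Y=y+1]\le\sfA+\lambda$. Hence, for every $y\in\bbN_0$,
\[
\frac{\kappa_1(X^\star|Y=y)+\lambda}{\kappa_1(X^\star|Y=y+1)+\lambda}\ \ge\ \frac{\kappa_1(X^\star|Y=0)+\lambda}{\sfA+\lambda},
\]
so it suffices to lower bound $\kappa_1(X^\star|Y=0)+\lambda$. For $\lambda>0$ this is trivial: $\kappa_1(X^\star|Y=0)+\lambda=\bbE[X^\star+\lambda\mid Y=0]\ge\lambda$, which gives $\lambda/(\sfA+\lambda)$. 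The case $\lambda=0$ requires the genuinely new estimate $\kappa_1(X^\star|Y=0)\ge\rme^{-\sqrt{2(\log\sfA-1)}}$.

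To obtain it I would combine the KKT conditions (Lemma~\ref{lem:KKT}) with an explicit capacity upper bound of the form $C(\sfA,0)\le\log\sfA-1$ (which holds in the regime of large $\sfA$ where the claimed bound is nonvacuous). Note first that $\kappa_1(X^\star|Y=0)=m_1/m_0=P_{Y^\star}(1)/P_{Y^\star}(0)$, and that, since $0\in\supp(P_{X^\star})$ when $\lambda=0$, the equality condition \eqref{eq:KKT_equality_condition} at $x=0$ reads $-\log P_{Y^\star}(0)=i(0;P_{X^\star})=C(\sfA,0)$, i.e.\ $P_{Y^\star}(0)=\rme^{-C(\sfA,0)}$. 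It remains to lower bound $P_{Y^\star}(1)$. For this, apply the KKT \emph{inequality} $i(x;P_{X^\star})\le C(\sfA,0)$ at a free point $x\in(0,\sfA]$; in the representation $i(x;P_{X^\star})=-H(\mathrm{Poisson}(x))+\sum_{y\ge0}\tfrac{x^y\rme^{-x}}{y!}\log\tfrac1{P_{Y^\star}(y)}$ keep only the two nonnegative terms $y=0$ and $y=1$, and use $P_{Y^\star}(0)=\rme^{-C(\sfA,0)}$, to get
\[
\log\frac1{P_{Y^\star}(1)}\ \le\ \frac{(\rme^{x}-1)\,C(\sfA,0)+\rme^{x}H(\mathrm{Poisson}(x))}{x},\qquad x\in(0,\sfA].
\]
Bounding $H(\mathrm{Poisson}(x))$ by an elementary closed-form expression, optimizing the right-hand side over $x$, and finally inserting $C(\sfA,0)\le\log\sfA-1$ yields $\kappa_1(X^\star|Y=0)=\rme^{C(\sfA,0)}P_{Y^\star}(1)\ge\rme^{-\sqrt{2(\log\sfA-1)}}$.

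The main obstacle is exactly this last step: one has to pick the free point $x$ and the auxiliary bound on the Poisson entropy so that the one-variable optimization closes into the clean form $\rme^{-\sqrt{2(\log\sfA-1)}}$, and one must keep track of the range of $\sfA$ for which both the capacity estimate and the resulting bound are meaningful. The remaining pieces — Cauchy--Schwarz, monotonicity of $y\mapsto\kappa_1(X^\star|Y=y)$, and the bracketing $\lambda\le X^\star+\lambda\le\sfA+\lambda$ — are elementary.
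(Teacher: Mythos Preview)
Your upper bound and the $\lambda>0$ lower bound coincide with the paper's argument: Cauchy--Schwarz for the former, and the bracketing $\lambda\le\bbE[X^\star+\lambda\mid Y]\le\sfA+\lambda$ for the latter. (The paper applies this bracketing directly for every $y$, without your monotonicity detour, but your version is equally valid.)

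For $\lambda=0$ you depart from the paper, and this is where there is a genuine gap. The paper does \emph{not} try to bound $P_{Y^\star}(1)$ via the KKT inequality at a free point combined with a Poisson-entropy estimate and a capacity upper bound. Instead it proceeds in two steps that you do not mention. First, Proposition~\ref{prop:locations} --- proved independently from the second-derivative condition $i''(x^\star)\le0$ at an interior support point together with the identity \eqref{eq:second_deriavtive_i_den} --- shows that the smallest nonzero mass point $x_{\min}:=\min\bigl(\supp(P_{X^\star})\setminus\{0\}\bigr)$ satisfies $x_{\min}\ge\rme^{-\sqrt{2(\log\sfA-1)}}$. Second, an elementary moment-ratio fact (for $f>0$, $\sup_{P_X}\bigl|\bbE[g(X)]/\bbE[f(X)]\bigr|=\max_x|g(x)/f(x)|$), applied to $X^\star$ conditioned on $X^\star>0$, gives $\kappa_1(X^\star\mid Y=y)\ge x_{\min}$ and $\kappa_1(X^\star\mid Y=y+1)\le\sfA$. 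The specific constant $\rme^{-\sqrt{2(\log\sfA-1)}}$ thus comes from the Lambert-$W$ analysis of $\log\frac{x}{\sfA}+\frac1x=0$ in Proposition~\ref{prop:locations}, not from an entropy/capacity optimization.

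Your proposed route --- bound $P_{Y^\star}(1)$ from the KKT inequality, plug in $C(\sfA,0)\le\log\sfA-1$, then optimize over $x$ --- is left open at precisely the step that carries all the content. You neither justify the capacity bound in that form, nor exhibit a Poisson-entropy estimate and a choice of $x$ that makes the one-variable optimization close to $\rme^{-\sqrt{2(\log\sfA-1)}}$; you yourself flag this as ``the main obstacle,'' and without it the $\lambda=0$ case is not proved. The missing idea is the link to the \emph{location} of the support via Proposition~\ref{prop:locations}, rather than to capacity or entropy estimates.
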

\begin{proof}
    See Appendix~\ref{app:lem:bound_cumulant}.
\end{proof}

\subsection{Counting Zeros of Complex Analytic Functions}

The key complex analytic tool that we will be using to count the number of zeros is the following lemma. 
\begin{lemma}[Tijdeman's Number of Zeros Lemma \cite{Tijdeman1971number}]\label{lem:number of zeros of analytic function}
 Let $\sfR, v, t$ be positive numbers such that $v>1$. For the complex valued function $f\neq  0$ which is analytic on $|z|\le(vt+v+t)\sfR$, its number of zeros $  \rmN(\cD_{\sfR};f)$ within the disk $\cD_{\sfR} = \{z\colon |z|\le \sfR\} $ satisfies
\begin{equation}
 \rmN(\cD_{\sfR};f) \le \frac{1}{\log v} \log \left( \frac{\max_{|z|\le (vt+v+t)\sfR } |f(z)|  }{ \max_{|z|\le t\sfR} |f(z)|} \right) \text{.} \label{eq:Tijdeman}
\end{equation}
\end{lemma}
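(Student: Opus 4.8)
\emph{Proof sketch (proposal).} The plan is to deduce the bound from Jensen's formula, applied not at the origin but at the point of the inner disk where $|f|$ is largest. First dispose of trivialities: if $f$ has no zeros in $\cD_{\sfR}$ the claim holds because the right-hand side of \eqref{eq:Tijdeman} is nonnegative (since $\cD_{t\sfR}\subseteq\cD_{(vt+v+t)\sfR}$ forces $\max_{|z|\le(vt+v+t)\sfR}|f|\ge\max_{|z|\le t\sfR}|f|$). So assume $\rmN(\cD_{\sfR};f)=n\ge 1$; then $f$ is nonconstant. Pick $z_0$ with $|z_0|\le t\sfR$ attaining $|f(z_0)|=\max_{|z|\le t\sfR}|f(z)|$; since $f\not\equiv 0$, the identity theorem gives $f(z_0)\neq 0$.

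The one genuinely clever point is the choice of two radii about $z_0$. Put $\rho_1=(t+1)\sfR$ and $\rho_2=v(t+1)\sfR$, so that $\rho_2/\rho_1=v$. Then (i) the disk $|z-z_0|\le\rho_1$ contains $\cD_{\sfR}$, and hence all $n$ zeros counted by $\rmN(\cD_{\sfR};f)$, because $|z_0|+\sfR\le t\sfR+\sfR=\rho_1$; and (ii) the disk $|z-z_0|\le\rho_2$ lies inside the region of analyticity $\{|z|\le(vt+v+t)\sfR\}$, because $|z_0|+\rho_2\le t\sfR+v(t+1)\sfR=(vt+v+t)\sfR$. This is exactly where the constant $vt+v+t=t+v(t+1)$ comes from: it is the smallest radius that lets a Jensen disk of radius $v$ times the ``enclosing'' radius $(t+1)\sfR$ fit after the centre is shifted by up to $t\sfR$.

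Now apply Jensen's formula to $g(w)=f(z_0+w)$ on $|w|\le\rho_2$: with $a_1,a_2,\dots$ the zeros of $g$ in $|w|<\rho_2$ (with multiplicity),
\begin{equation}
\sum_{k}\log\frac{\rho_2}{|a_k|}=\frac{1}{2\pi}\int_0^{2\pi}\log\big|g(\rho_2 \rme^{\rmi\theta})\big|\,\rmd\theta-\log|g(0)| .
\end{equation}
By (i), each of the $n$ zeros $z_j-z_0$ of $g$ has modulus $\le\rho_1$, hence contributes at least $\log(\rho_2/\rho_1)=\log v$ to the left-hand sum, while every remaining term is nonnegative; so the left side is $\ge n\log v$. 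For the right side, the circular mean of $\log|g|$ is at most $\log\max_{|w|=\rho_2}|g|\le\log\max_{|z|\le(vt+v+t)\sfR}|f(z)|$ by (ii), and $\log|g(0)|=\log|f(z_0)|=\log\max_{|z|\le t\sfR}|f(z)|$. Combining and dividing by $\log v>0$ gives exactly \eqref{eq:Tijdeman}.

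I expect essentially no obstacle beyond the radius bookkeeping above; the remaining point needing care is that the stated Jensen formula presumes $g$ has no zeros on $|w|=\rho_2$, which is handled by running the argument with $\rho_2$ replaced by $\rho_2(1-\varepsilon)$ — still inside the analyticity region, and still enclosing all $n$ zeros since $\rho_1<\rho_2$ — and letting $\varepsilon\downarrow 0$. Alternatively one can avoid integrals and boundary zeros altogether: divide $g$ by the Blaschke factors $\rho_2(w-a_j)/(\rho_2^{2}-\bar a_j w)$ for the $n$ inner zeros, obtaining a function analytic on $|w|\le\rho_2$ whose modulus equals $|g|$ on $|w|=\rho_2$, and apply the maximum-modulus principle at $w=0$; this yields $v^{n}\le\prod_j(\rho_2/|a_j|)\le\max_{|w|=\rho_2}|g|/|g(0)|$ directly.
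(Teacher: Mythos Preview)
Your proof is correct. The paper does not prove this lemma at all; it is stated with a citation to Tijdeman's original work and then used as a black box. Your argument is essentially the standard one: recentre at the point $z_0$ where $|f|$ is maximal on the inner disk, choose the Jensen radii $\rho_1=(t+1)\sfR$ and $\rho_2=v\rho_1$ so that the small disk $\cD_{\sfR}$ sits inside $|w|\le\rho_1$ while $|w|\le\rho_2$ stays inside the domain of analyticity, and then read off the zero count from Jensen's formula (or the equivalent Blaschke/maximum-modulus variant you sketch). The bookkeeping $t\sfR+v(t+1)\sfR=(vt+v+t)\sfR$ is exactly the origin of the otherwise odd-looking constant in the statement, and your handling of the boundary-zero technicality via a limiting $\rho_2(1-\varepsilon)$ or Blaschke products is the routine fix. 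There is nothing to compare against in the paper itself.
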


\section{Main Result and Discussion}
\label{sec:main_result}

The main  results of this paper are summarized in the following theorem. 
\begin{theorem}
    Let $P_{X^\star}$ be the capacity-achieving input of the amplitude constrained Poisson channel. Then,
    \begin{enumerate}[label=(\roman*)]
        \item \emph{(On the Location of Interior Support Points)} For $\sfA+\lambda> \rme$ and $x^\star \in \supp(P_{X^\star})\setminus\{0,\sfA\}$ we have
        \begin{align}
 \rme^{ - \sqrt{2(\log(\sfA+\lambda)-1)} } 
 &\le  (\sfA +\lambda) \rme^{ W_{-1}(-\frac{1}{\sfA+\lambda} )}  \\
 &\le   x^\star+\lambda \\
 &\le  (\sfA +\lambda) \rme^{ W_0(-\frac{1}{\sfA+\lambda} )} \le \sfA +\lambda -1. 
\end{align}

\item \emph{(Upper Bound on the Number of Support Points)}
\begin{itemize}[leftmargin =-0.2cm]
    \item For $\sfA+\lambda < \rme$ we have
    \begin{equation}
        |\supp(P_{X^\star})|=2.
    \end{equation}
    \item For $\lambda=0$ and $\sfA > \rme$ we have
    \begin{align}
    &|\supp{(P_{X^\star})}| \leq\Big\lfloor3  \nonumber\\
    &\left.+ \log\left(1+ \sfA \rme^{2\rme\sfA+1}\left(\log(\sfA)+\sqrt{2(\log(\sfA)-1)}\right) \right)  \right\rfloor \label{eq:upper_bound_num_points}
\end{align}
\item For $\lambda>0$ and $\sfA+\lambda > \rme$ we have
\begin{align}
    &|\supp{(P_{X^\star})}| \leq \Big\lfloor 3 \nonumber\\
    & \hspace{-0.1cm}+ \log\left( 
 \hspace{-0.02cm}1+ \left(\rme(\sfA+\lambda)+2\lambda\hspace{-0.02cm}\right) \rme^{2(\rme(\sfA+\lambda)+\lambda)}\log\frac{\sfA+\lambda}{\lambda} \right) \Big\rfloor
\end{align}
\end{itemize}

    \item \emph{(Lower Bound on the Number of Support Points)}
    For all $\sfA > 0$ and $\lambda\ge 0$, we have
    \begin{equation}
        |\supp(P_{X^\star})| \ge \left\lceil\max\{2, \rme^{C(\sfA, \lambda)}\}\right\rceil.
    \end{equation}
    \end{enumerate}
\end{theorem}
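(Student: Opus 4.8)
The plan is to bound the mutual information at the optimal input by the Shannon entropy of the input and then invoke the elementary maximum-entropy inequality. First I would recall that, by \cite{shamai1990capacity}, $P_{X^\star}$ is a pmf supported on finitely many points, so that $H(X^\star)$ is well defined and finite; this is the only structural input needed.

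For the bound $|\supp(P_{X^\star})|\ge 2$, I would observe that a single-point support forces $X^\star$ to be deterministic and hence $I(X^\star;Y)=0$, whereas $C(\sfA,\lambda)>0$ for every $\sfA>0$: already the two-point input on $\{0,\sfA\}$ induces the distinct output pmfs $P_{Y|X}(\cdot|0)$ and $P_{Y|X}(\cdot|\sfA)$ (Poisson laws of means $\lambda$ and $\sfA+\lambda$), yielding strictly positive mutual information. Thus a one-atom input is suboptimal and $|\supp(P_{X^\star})|\ge 2$.

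For the main term, the key step is the chain
\begin{equation}
C(\sfA,\lambda) = I(X^\star;Y) = H(X^\star) - H(X^\star\mid Y) \le H(X^\star) \le \log |\supp(P_{X^\star})|,
\end{equation}
where the first inequality uses $H(X^\star\mid Y)\ge 0$ (legitimate since $X^\star$ is discrete) and the second is the maximum-entropy bound for a variable taking $|\supp(P_{X^\star})|$ distinct values. Exponentiating gives $|\supp(P_{X^\star})|\ge \rme^{C(\sfA,\lambda)}$, and since the left side is a positive integer I may round up to $\lceil \rme^{C(\sfA,\lambda)}\rceil$. Taking the maximum with the two-point bound from the previous paragraph produces $|\supp(P_{X^\star})|\ge \lceil\max\{2,\rme^{C(\sfA,\lambda)}\}\rceil$, as claimed.

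I do not expect a genuine analytic obstacle here; the only points needing care are that the entropy inequalities are applied to a genuinely discrete input (supplied by the discreteness result of \cite{shamai1990capacity}) and that $C(\sfA,\lambda)>0$ so the trivial bound is actually active. The mild weakness of the statement is that it is ``soft'': turning it into an explicit growth rate in $\sfA$ would require a matching lower bound on $C(\sfA,\lambda)$, which is a separate task and not pursued in this argument.
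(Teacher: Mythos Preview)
Your proposal addresses only part~(iii) of the theorem and is entirely silent on parts~(i) and~(ii). The location bounds on interior support points and the upper bounds on $|\supp(P_{X^\star})|$ require substantive work---in the paper, a second-derivative test at interior maxima of $i(x;P_{X^\star})$ combined with the identity in \eqref{eq:second_deriavtive_i_den} for part~(i), and a zero-counting argument via Tijdeman's lemma applied to the analytic extension of $(x+\lambda)G''(x)+1$ for part~(ii). None of these ingredients appear in your proposal, so as a proof of the full statement it is incomplete.

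For part~(iii) itself, your argument is correct and in fact more elementary than the paper's. The paper proceeds through the KKT equality condition \eqref{eq:KKT_equality_condition} and the relative-entropy chain rule \eqref{eq:KL_identity} to obtain the exact implicit expression
\[
N = \frac{\rme^{C(\sfA,\lambda)}}{\frac{1}{N}\sum_{x^\star\in\supp(P_{X^\star})}\rme^{-D(\delta_{x^\star}\,\|\,P_{X^\star|Y}\,|\,P_{Y|X}(\cdot|x^\star))}},
\]
and then invokes non-negativity of the conditional relative entropy to conclude $N\ge\rme^{C(\sfA,\lambda)}$. Your route---$C=H(X^\star)-H(X^\star\mid Y)\le H(X^\star)\le\log N$---reaches the same inequality in one line, using only the discreteness of $X^\star$ and the maximum-entropy bound. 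What the paper's approach buys is the exact formula for $N$, which in principle carries more information and could be sharpened further by controlling the conditional relative-entropy terms; your approach is cleaner but yields only the inequality. Both are valid, and for the bare statement of part~(iii) yours is arguably preferable.
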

We can get an explicit lower bound on the number of support points by using the capacity bound  in~\cite[Th.~4]{lapidoth2009capacity}:
\begin{align}
    \rme^{C(\sfA, \lambda)} \ge \sqrt{\frac{2\sfA}{\pi \rme^3}} \left(1+\frac{3}{\sfA}\right)^{1+\frac{\sfA}{3}} \rme^{-\sqrt{\frac{\lambda+\frac{1}{12}}{\sfA}}\left(\frac{\pi}{4}+\frac{1}{2}\log 2\right)}
\end{align}
which grows as $\sqrt{\sfA}$ for $\sfA\to\infty$. The upper bound on the support size in \eqref{eq:upper_bound_num_points} grows as $\sfA$ for $\sfA\to\infty$ and for any finite $\lambda$.

\begin{figure}
    \centering
    \includegraphics[width=\columnwidth]{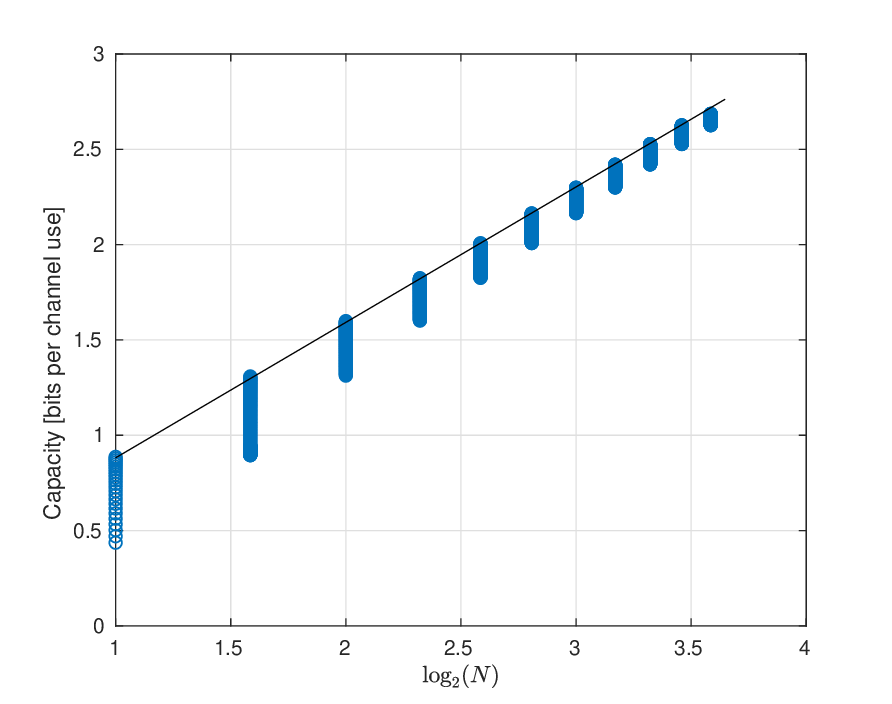}
    \caption{Capacity $C(\sfA,0)$ vs size of the optimal input distribution $N=|\supp(P_{X^\star})|$ for the case $\lambda=0$. Each circle denotes one value of $\sfA$. Solid line: interpolation of the values of $\sfA$ for which the number of support points increases.}
    \label{fig:capacity_vs_support_size}
\end{figure}

In Fig. \ref{fig:capacity_vs_support_size}, for $\lambda=0$ and several values of $\sfA$, we present $C(\sfA,0)$ versus the optimal number of support points $N$. The values of $C$ and $N$ are determined as described in \cite{barletta2022Poisson}.

The solid line in Fig. \ref{fig:capacity_vs_support_size} interpolates the points before a change in the value of $N$. This interpolating curve serves as an estimate of the relationship between $N$ and $C(\sfA,0)$. Specifically, we derive the asymptotic relationship:
\begin{equation}
C(\sfA,0) \approx 0.7 \log_2 N \text{ [bpcu]}, \quad N \to \infty,
\end{equation}
where `bpcu' stands for bits per channel use. Since the capacity scaling is $C(\sfA,0) \approx 0.5\log_2 \sfA$ as $\sfA\to \infty$ (as shown in \cite{lapidoth2009capacity}), we anticipate an asymptotic relationship of $N \approx \sfA^{\frac{0.5}{0.7}}$. This approximation suggests that both our proposed upper and lower bounds on $N$ require further refinement.

\section{Proofs}
\label{sec:proofs}
This section is dedicated to the proofs. 
\subsection{ On Location of Interior Support Points }
To produce bounds on the support size we also need additional information about the location of interior support points, which is provided next. 
\begin{prop}\label{prop:locations}
    For $\sfA+\lambda <\rme$, $|\supp(P_{X^\star})| =2$. For  $|\supp(P_{X^\star})|  \ge 3$,  we have that $\sfA+\lambda >\rme$ and
    \begin{align}
 \rme^{ - \sqrt{2(\log(\sfA+\lambda)-1)} } 
 &\le  (\sfA +\lambda) \rme^{ W_{-1}(-\frac{1}{\sfA+\lambda} )}  \\
 &\le   x^\star+\lambda \\
 &\le  (\sfA +\lambda) \rme^{ W_0(-\frac{1}{\sfA+\lambda} )} \le \sfA +\lambda -1. 
\end{align}
\end{prop}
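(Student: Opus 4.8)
The plan is to exploit the fact that at an interior support point $x^\star\in(0,\sfA)$ the function $i(x;P_{X^\star})$ achieves an interior maximum (by the KKT conditions, Lemma~\ref{lem:KKT}), so that $i'(x^\star;P_{X^\star})=0$ and $i''(x^\star;P_{X^\star})\le 0$. Using the decomposition $i(x;P_X)=G(x)+(x+\lambda)\log(x+\lambda)-(x+\lambda)$ from Lemma~\ref{lem:Derivatives_K}, the stationarity condition reads
\begin{equation}
0=i'(x^\star;P_{X^\star})=G'(x^\star)+\log(x^\star+\lambda),
\end{equation}
and by the formula for $G'$ this is $\log(x^\star+\lambda)=-G'(x^\star)=\sum_{y}P_{Y|X}(y|x^\star)\log\bbE[X^\star+\lambda\,|\,Y=y]$. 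First I would combine this with the elementary two-sided bound $\lambda\le \bbE[X^\star+\lambda\,|\,Y=y]\le \sfA+\lambda$ to recover the crude sandwich, but that is not yet sharp enough; the improvement comes from the second-order information.

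The key step is to feed the stationarity relation into the second-derivative identity \eqref{eq:second_deriavtive_i_den}. Since $i'(x^\star;P_{X^\star})=0$, that identity collapses to
\begin{equation}
i''(x^\star;P_{X^\star})=\sum_{y=0}^\infty P_{Y|X}(y|x^\star)\log\frac{x^\star+\lambda}{\bbE[X^\star+\lambda\,|\,Y=y+1]}+\frac{1}{x^\star+\lambda}.
\end{equation}
Now $\bbE[X^\star+\lambda\,|\,Y=y+1]\le \sfA+\lambda$ for every $y$, so each logarithm is at least $\log\frac{x^\star+\lambda}{\sfA+\lambda}$, and the requirement $i''(x^\star;P_{X^\star})\le 0$ yields
\begin{equation}
\log\frac{x^\star+\lambda}{\sfA+\lambda}+\frac{1}{x^\star+\lambda}\le 0.
\end{equation}
Writing $u=\frac{x^\star+\lambda}{\sfA+\lambda}\in(0,1]$, this is $\log u\le -\frac{1}{(\sfA+\lambda)u}$, i.e. $u\log u\le -\frac{1}{\sfA+\lambda}$, equivalently $u\rme^{u\log u}\le \dots$; more cleanly, setting $s=\log u\le 0$ gives $s\rme^{s}\le -\frac{1}{\sfA+\lambda}$. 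The two real solutions of $s\rme^{s}=-\frac{1}{\sfA+\lambda}$ are $s=W_0(-\frac{1}{\sfA+\lambda})$ and $s=W_{-1}(-\frac{1}{\sfA+\lambda})$, which exist and are distinct exactly when $-\frac{1}{\sfA+\lambda}>-\rme^{-1}$, i.e. $\sfA+\lambda>\rme$; this is also where the dichotomy ``$|\supp|=2$ for $\sfA+\lambda<\rme$'' enters, since if no interior point can exist the support reduces to the two endpoints (the endpoint-only case is already in the literature, e.g. \cite{mceliece1979practical,shamai1990capacity}). Since $s\rme^s$ is negative precisely on $(W_{-1},W_0)$ with value $\le -\frac{1}{\sfA+\lambda}$ on the sub-interval $[W_{-1}(-\frac{1}{\sfA+\lambda}),W_0(-\frac{1}{\sfA+\lambda})]$, we get $W_{-1}(-\frac{1}{\sfA+\lambda})\le \log u\le W_0(-\frac{1}{\sfA+\lambda})$, which exponentiates and multiplies by $\sfA+\lambda$ to the claimed two middle inequalities. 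The outermost inequality $(\sfA+\lambda)\rme^{W_0(-\frac{1}{\sfA+\lambda})}\le \sfA+\lambda-1$ follows from the identity $\rme^{W_0(z)}=z/W_0(z)$ together with the bound $W_0(-t)\le -t$ for $t\in(0,\rme^{-1})$ (so $\rme^{W_0(-1/(\sfA+\lambda))}=\frac{1}{(\sfA+\lambda)(-W_0)}\le\frac{1-?}{}$ — more directly, $u\le 1$ and $\log u\le -1/((\sfA+\lambda)u)$ force $(\sfA+\lambda)u\le \sfA+\lambda-1$ after using $\log u\le u-1$), and the leftmost inequality $\rme^{-\sqrt{2(\log(\sfA+\lambda)-1)}}\le (\sfA+\lambda)\rme^{W_{-1}(-1/(\sfA+\lambda))}$ is the already-established lower bound from the $\lambda=0$ analogue of Lemma~\ref{lem:bound_cumulant}, transported via $\sfA\mapsto\sfA+\lambda$; concretely one checks $W_{-1}(-1/v)\ge -\log v-\sqrt{2(\log v-1)}$ for $v>\rme$ by a standard estimate on the $W_{-1}$ branch.

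The main obstacle I anticipate is twofold: first, justifying that the maximum of $i(\cdot;P_{X^\star})$ at an interior support point is genuinely an \emph{interior local maximum} of the smooth function on $(0,\sfA)$ — i.e. that $x^\star$ is not isolated-by-accident but that $i'$ vanishes and $i''\le 0$ there — which requires the KKT characterization plus the smoothness/analyticity of $x\mapsto i(x;P_{X^\star})$ on $(0,\infty)$ already packaged in Lemma~\ref{lem:Derivatives_K}; and second, the Lambert-$W$ bookkeeping: carefully tracking which branch of $W$ gives the upper versus the lower bound, checking the sign of $s\rme^s-(-\frac{1}{\sfA+\lambda})$ on the relevant interval, and verifying the two ``clean'' closed-form bounds $\le\sfA+\lambda-1$ and $\ge\rme^{-\sqrt{2(\log(\sfA+\lambda)-1)}}$ with the right inequalities for $W_0$ and $W_{-1}$. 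Everything else is a short computation.
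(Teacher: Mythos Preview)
Your proposal is correct and follows essentially the same path as the paper: use $i'(x^\star)=0$ and $i''(x^\star)\le 0$ at an interior support point, plug into the identity \eqref{eq:second_deriavtive_i_den}, bound $\bbE[X^\star+\lambda\mid Y]\le\sfA+\lambda$, and reduce to the scalar inequality $\log\frac{x^\star+\lambda}{\sfA+\lambda}+\frac{1}{x^\star+\lambda}\le 0$, which is then solved via the two real branches of Lambert $W$ and relaxed to the elementary outer bounds. One small slip to fix in execution: for the outer upper bound you need the \emph{lower} bound $\log u\ge 1-\tfrac{1}{u}$ (equivalently $\log(1/u)\le \tfrac{1}{u}-1$), not $\log u\le u-1$, to combine with $\log u\le -\tfrac{1}{(\sfA+\lambda)u}$ and obtain $(\sfA+\lambda)u\le\sfA+\lambda-1$; this is exactly the inequality the paper uses.
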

\begin{proof} 
See Appendix~\ref{app:prop:locations}.
\end{proof}

\subsection{Upper Bound on the Number of Support Points}
The KKT conditions of Lemma~\ref{lem:KKT} imply that
\begin{equation}
    \supp(P_{X^\star}) \subseteq \rmZ\left([0, \sfA]; \kl{P_{Y|X}(\cdot|x)}{P_{Y^\star}}-C \right).
\end{equation}
Switching to the cardinalities of the sets, we have
\begin{align}
    |\supp{(P_{X^\star})}| &\leq 
    \mathrm{N}\lr{ \, [0,\sfA]; \, \kl{P_{Y|X}(\cdot|x)}{P_{Y^\star}}-C \, }  \\
    &=\mathrm{N}\lr{ \, (0,\sfA]; \, \kl{P_{Y|X}(\cdot|x)}{P_{Y^\star}}-C \, }+1 \label{eq:0_always_mass_point} \\
    &\le \mathrm{N}\lr{ \, (0,\sfA]; \, G'(x)+\log(x+\lambda) \, }+2 \label{eq:use_Rolle} \\
    &\le \mathrm{N}\lr{ \, (0,\sfA]; \, G''(x)+\frac{1}{x+\lambda} \, }+3 \\
    &= \mathrm{N}\lr{ \, (0,\sfA]; \, (x+\lambda)G''(x)+1 \, }+3 \label{eq:multiply_same_zeros} \\
    &= \mathrm{N}\lr{ \, (0,\sfA]; \, g(x) \, }+3 \label{eq:introduce_g}
    \end{align}
where~\eqref{eq:0_always_mass_point} holds because $0 \in \supp(P_{X^\star})$ (see~\cite{cao2014capacity}); \eqref{eq:use_Rolle} follows from Rolle's theorem (see~\cite[Lemma~3]{dytso2019capacity}) and from Lemma~\ref{lem:Derivatives_K}; and \eqref{eq:multiply_same_zeros} holds because multiplying the function by $x+\lambda$ does not change the number of zeros in $(0,\sfA]$. In~\eqref{eq:introduce_g}, we have introduced the function
\begin{align}
    g(x) &= (x+\lambda)G''(x)+1 \\
    &= (x+\lambda)\expcnd{\log\frac{\kappa_1(X^\star|Y)+\lambda}{\kappa_1(X^\star|Y+1)+\lambda}}{X=x}+1.
\end{align}
Let us now consider the complex analytic extension of the real function $g$:
\begin{align}
    \Breve{g}(z) = 1+(z+\lambda) \sum_{y=0}^\infty &\frac{(z+\lambda)^{y}}{y!}\rme^{-(z+\lambda)}\nonumber \\
    &\qquad
 \cdot\log \frac{\kappa_1(X^\star|Y=y)+\lambda}{\kappa_1(X^\star|Y=y+1)+\lambda}
\end{align}
which is analytic for all $z \in \bbC$.


Next, we compute bounds on the maximum value of $|\Breve{g}(z)|$ in a closed disk of radius $\sfB$. For the upper bound, we have
\begin{align}
    &\max_{|z|\le \sfB} |\Breve{g}(z)| = \max_{|z|\le \sfB} \left| 1+ (z+\lambda) \sum_{y=0}^\infty \frac{(z+\lambda)^y}{y!}\rme^{-(z+\lambda)}\right. \nonumber\\
    &\qquad\qquad\qquad
 \left.\cdot\log \frac{\kappa_1(X^\star|Y=y)+\lambda}{\kappa_1(X^\star|Y=y+1)+\lambda}\right| \\
 &\le 1+ \sum_{y=0}^\infty \frac{(\sfB+\lambda)^{y+1}}{y!}\rme^{\sfB-\lambda}
  \log \frac{\kappa_1(X^\star|Y=y+1)+\lambda}{\kappa_1(X^\star|Y=y)+\lambda}\label{eq:apply_triangle_ineq} \\
 &\le 1+(\sfB+\lambda) \rme^{2\sfB}\cdot\left\{\begin{array}{cc}
     \log\frac{\sfA+\lambda}{\lambda}
     & \lambda>0, \\
     \log(\sfA)+\sqrt{2(\log(\sfA)-1)} & \lambda=0
 \end{array} \right. \label{eq:apply_bound_ratio_cumulants}
\end{align}
where~\eqref{eq:apply_triangle_ineq} follows from the triangle inequality, from $|z|\le \sfB$, and because by Lemma~\ref{lem:bound_cumulant} the ratio of cumulants is less than $1$; and \eqref{eq:apply_bound_ratio_cumulants} follows from the lower bound on the ratio of cumulants of Lemma~\ref{lem:bound_cumulant}.

For the lower bound on the maximum value of $|\Breve{g}(z)|$, for $\sfB\ge \lambda$ we have
\begin{equation}
    \max_{|z|\le \sfB} |\Breve{g}(z)| \ge |\Breve{g}(-\lambda)|= 1.
\end{equation}

Let us now apply Tijdeman's lemma for the case $\lambda =0$. By choosing $v=\rme$ and $t=0$, we get
\begin{align}
 &\mathrm{N}\lr{ \, (0,\sfA]; \, g(x) \, } \le
    \mathrm{N}\lr{ \, {\cal D}_{\sfA}; \, \breve{g}(z) \, } \\
    &\le \log\left(1+ \sfA \rme^{2\rme\sfA+1}\left(\log(\sfA)+\sqrt{2(\log(\sfA)-1)}\right) \right).
\end{align}

For the case $\lambda>0$, by applying Tijdeman's lemma with $v=\rme$ and  $t=\frac{\lambda}{\sfA}$ we get:
\begin{align}
 &\mathrm{N}\lr{ \, (0,\sfA]; \, g(x) \, } \le
    \mathrm{N}\lr{ \, {\cal D}_{\sfA}; \, \breve{g}(z) \, } \\
    &\le \log\left(1+ \left(\rme(\sfA+\lambda)+2\lambda\right) \rme^{2(\rme(\sfA+\lambda)+\lambda)}\log\frac{\sfA+\lambda}{\lambda} \right).
\end{align}

Putting everything together, we get
\begin{align}
    &|\supp{(P_{X^\star})}| \leq 3 \nonumber\\
    &+\left\{
    \begin{array}{lc}
       \log\left(1+ \left(\rme(\sfA+\lambda)+2\lambda\right) \rme^{2(\rme(\sfA+\lambda)+\lambda)}\log\frac{\sfA+\lambda}{\lambda} \right)  & \lambda>0 \\
       \log\left(1+ \sfA \rme^{2\rme\sfA+1}\left(\log(\sfA)+\sqrt{2(\log(\sfA)-1)}\right) \right)  & \lambda=0
    \end{array}
    \right.
\end{align}
which grows as $\sfA$ for $\sfA\to \infty$.

 \subsection{Lower Bound on the Number of Support Points}

There are several ways of finding lower bounds. We here offer an approach that also relies on the KKT and provides an exact but implicit expression for $N$.
We begin by recalling that for $P_X \to P_{Y|X} \to P_Y$ and $Q_X  \to P_{Y|X} \to Q_{Y}$, we have that
\begin{equation}
    \kl{P_X}{Q_X} = \kl {P_Y}{Q_Y} + D (P_{X|Y} \| Q_{X|Y} |P_Y)  \label{eq:KL_identity}
\end{equation}
where the conditional relative entropy is defined as
\begin{equation}
    D (P_{X|Y} \| Q_{X|Y} |P_Y) = \sum_{y=0}^\infty P_Y(y) \kl{P_{X|Y}(\cdot|y)}{Q_{X|Y}(\cdot|y)}
\end{equation}
\begin{lemma} Let $N:=| \supp(P_{X^\star})|$. Then,
    \begin{equation}
N = \frac{\rme^{  C(\sfA,\lambda) }}{  \frac{1}{N}  \sum_{ x^{\star} \in \supp(P_{X^{\star}}) }  \rme^{  -  D  \left(\delta_{x^\star} \| P_{X^\star|Y} |P_{Y|X }(\cdot|x^\star) \right) }}. 
\end{equation}
\end{lemma}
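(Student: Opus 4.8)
The plan is to start from the KKT equality condition \eqref{eq:KKT_equality_condition}, which says that $i(x^\star; P_{X^\star}) = \kl{P_{Y|X}(\cdot|x^\star)}{P_{Y^\star}} = C(\sfA,\lambda)$ for every $x^\star \in \supp(P_{X^\star})$, and to interpret each such term through the chain-rule identity \eqref{eq:KL_identity}. Concretely, I would take $P_X = \delta_{x^\star}$, the point mass at $x^\star$, so that the induced output distribution is $P_{Y|X}(\cdot|x^\star)$, and take $Q_X = P_{X^\star}$, so that the induced output is $P_{Y^\star}$. Applying \eqref{eq:KL_identity} with these choices gives
\begin{equation}
\kl{\delta_{x^\star}}{P_{X^\star}} = \kl{P_{Y|X}(\cdot|x^\star)}{P_{Y^\star}} + D\!\left(\delta_{x^\star} \,\|\, P_{X^\star|Y} \,\big|\, P_{Y|X}(\cdot|x^\star)\right).
\end{equation}

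Next I would evaluate the left-hand side: since $x^\star \in \supp(P_{X^\star})$ and $P_{X^\star}$ is discrete (by \cite{shamai1990capacity}), $\kl{\delta_{x^\star}}{P_{X^\star}} = \log \frac{1}{P_{X^\star}(x^\star)}$. Combining with the KKT equality, this yields
\begin{equation}
\log \frac{1}{P_{X^\star}(x^\star)} = C(\sfA,\lambda) + D\!\left(\delta_{x^\star} \,\|\, P_{X^\star|Y} \,\big|\, P_{Y|X}(\cdot|x^\star)\right),
\end{equation}
hence $P_{X^\star}(x^\star) = \rme^{-C(\sfA,\lambda)}\, \rme^{-D\left(\delta_{x^\star} \| P_{X^\star|Y} | P_{Y|X}(\cdot|x^\star)\right)}$ for each support point $x^\star$.

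Finally I would sum this identity over all $x^\star \in \supp(P_{X^\star})$. The left-hand side sums to $\sum_{x^\star} P_{X^\star}(x^\star) = 1$, since $P_{X^\star}$ is a pmf supported on exactly these $N$ points. Therefore $1 = \rme^{-C(\sfA,\lambda)} \sum_{x^\star \in \supp(P_{X^\star})} \rme^{-D\left(\delta_{x^\star} \| P_{X^\star|Y} | P_{Y|X}(\cdot|x^\star)\right)}$, and multiplying and dividing by $N$ and rearranging gives exactly
\begin{equation}
N = \frac{\rme^{C(\sfA,\lambda)}}{\frac{1}{N}\sum_{x^\star \in \supp(P_{X^\star})} \rme^{-D\left(\delta_{x^\star} \| P_{X^\star|Y} | P_{Y|X}(\cdot|x^\star)\right)}},
\end{equation}
which is the claimed formula.

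I do not expect a genuine obstacle here; the argument is essentially bookkeeping once the right substitution into \eqref{eq:KL_identity} is identified. The one point that deserves care is verifying the chain-rule identity \eqref{eq:KL_identity} is applicable with a degenerate input $\delta_{x^\star}$ — one must check that $P_{Y|X}(\cdot|x^\star)$ is absolutely continuous with respect to $P_{Y^\star}$ (which holds since $x^\star$ is in the support of $P_{X^\star}$, so $P_{Y^\star}$ puts mass on every $y$ reachable from $x^\star$), and that the conditional relative entropy term $D(\delta_{x^\star}\|P_{X^\star|Y}|P_{Y|X}(\cdot|x^\star))$ is well-defined and finite, which follows because $P_{X^\star|Y}(x^\star|y) > 0$ for every $y$ with $P_{Y|X}(y|x^\star) > 0$. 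With these routine checks in place the proof is complete.
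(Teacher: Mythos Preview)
Your proposal is correct and follows essentially the same argument as the paper: apply the chain-rule identity \eqref{eq:KL_identity} with $P_X=\delta_{x^\star}$ and $Q_X=P_{X^\star}$, combine with the KKT equality \eqref{eq:KKT_equality_condition} to obtain $P_{X^\star}(x^\star)=\rme^{-C(\sfA,\lambda)-D(\delta_{x^\star}\|P_{X^\star|Y}\mid P_{Y|X}(\cdot|x^\star))}$, then sum over the support and rearrange. Your added remarks on absolute continuity are a welcome clarification but do not change the substance of the proof.
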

\begin{proof}
    Using the  KKT condition in \eqref{eq:KKT_equality_condition}, we have that for $x^{\star} \in \supp(P_{X^{\star}})$
\begin{align}
    C(\sfA,\lambda) &= \kl{P_{Y|X}(\cdot|x^\star)}{P_{Y^\star}}\\
    &=  \kl{ P_{Y_{x^\star} }  }{P_{Y^\star}} \label{eq:Def_output_P_Y}\\
    &= \kl{\delta_{x^\star} }{  P_{X^\star} } -    D (\delta_{x^\star} \| P_{X^\star|Y} |P_{Y_{x^\star} }) \label{eq:apply_KL_identity}\\
    &= \log \frac{1}{P_{X^\star} (x^\star)} -    D (\delta_{x^\star} \| P_{X^\star|Y} |P_{Y_{x^\star} }), \label{eq:last_step_P(x)}
\end{align}
where  \eqref{eq:Def_output_P_Y} follows by defining $\delta_{x^\star} \to P_{Y|X} \to P_{Y^\star}$;  and \eqref{eq:apply_KL_identity} follows by using \eqref{eq:KL_identity}.

By rearranging~\eqref{eq:last_step_P(x)} we arrive at  
\begin{equation}
    P_{X^\star}(x^\star) = \rme^{ - C(\sfA,\lambda) -  D  \left(\delta_{x^\star} \| P_{X^\star|Y} |P_{Y|X }(\cdot|x^\star) \right) }
\end{equation}
for $x^{\star} \in \supp(P_{X^{\star}})$.
Now summing over $x^{\star} \in \supp(P_{X^{\star}})$ we arrive at
\begin{equation}
1 =   \rme^{ - C(\sfA,\lambda) } \sum_{ x^{\star} \in \supp(P_{X^{\star}}) }  \rme^{  -  D  \left(\delta_{x^\star} \| P_{X^\star|Y} |P_{Y|X }(\cdot|x^\star) \right) }. \label{eq:Intermideiate_step}
\end{equation}
Rearranging \eqref{eq:Intermideiate_step}, we arrive at the following exact characterization of $N$:
\begin{equation}
N = \frac{\rme^{  C(\sfA,\lambda) }}{  \frac{1}{N}  \sum_{ x^{\star} \in \supp(P_{X^{\star}}) }  \rme^{  -  D  \left(\delta_{x^\star} \| P_{X^\star|Y} |P_{Y|X }(\cdot|x^\star) \right) }}. 
\end{equation}
\end{proof}

The proof of the lower bound now follows by using the non-negativity of relative entropy, which leads to 
\begin{equation}
    N \ge  \rme^{ C(\sfA,\lambda) }. 
\end{equation}

\section{Conclusion}
\label{sec:conclusion}
This work has focused on providing upper and lower bounds on the support size of the capacity-achieving distribution for the discrete-time Poisson channel. The paper both extends the bounds to the more general case with non-zero dark current and improves on previous bounds. Interestingly, our results suggest that the oscillation theorem, which results in tight bounds on the support size in the additive Gaussian noise channel \cite{dytso2019capacity}, does not appear to be the right tool for the Poisson channel. Instead, a direct application of complex analytic tools results in tighter bounds.

An interesting future direction is to find the exact order of the size of the support. The numerical results suggest that neither the upper nor lower bounds are tight. It would also be interesting to apply some of the current techniques to problems such as randomized identification  \cite{labidi2023information} which also involves a Poisson noise channel.

\begin{appendices}

    \section{Bound on a Moment Ratio}
  \begin{lemma}\label{lem:moment_ratio} Suppose that $f>0$ and $g$ is arbitrary. Then,
\begin{align}
        \sup_{X \in [a,b] }  \left | \frac{\bbE[g(X)]}{\bbE[  f(X) ] } \right|  = \max_{x \in [a,b]} \left| \frac{g(x)}{ f(x)} \right|.
    \end{align}
 \end{lemma}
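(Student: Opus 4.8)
The statement is that for a positive function $f$ and arbitrary $g$, the supremum over all random variables $X$ supported in $[a,b]$ of $|\bbE[g(X)]/\bbE[f(X)]|$ equals the pointwise supremum $\max_{x\in[a,b]}|g(x)/f(x)|$. The plan is to prove the two inequalities separately. For ``$\le$'', I would write $|\bbE[g(X)]| \le \bbE[|g(X)|] = \bbE\big[\tfrac{|g(X)|}{f(X)}\,f(X)\big] \le \big(\max_{x\in[a,b]}\tfrac{|g(x)|}{f(x)}\big)\,\bbE[f(X)]$, using that $f>0$ so division is legitimate and $f(X)\ge 0$ almost surely, hence we may pull the sup out of the expectation. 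Dividing by $\bbE[f(X)]>0$ gives the bound for every admissible $X$, and taking the supremum over $X$ preserves it.

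For ``$\ge$'', I would exhibit a near-optimal $X$: let $x_0\in[a,b]$ (or a sequence $x_n$ approaching the supremum, to handle the case where the max is not attained — though if $f,g$ are continuous on the compact interval the max is attained) achieve $|g(x_0)/f(x_0)|$ close to $\max_{x}|g(x)/f(x)|$, and take $X=\delta_{x_0}$ the point mass. Then $\bbE[g(X)]/\bbE[f(X)] = g(x_0)/f(x_0)$, so the left-hand supremum is at least $|g(x_0)/f(x_0)|$, and letting $x_0$ approach the optimizer yields the reverse inequality. Combining the two directions gives equality.

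I do not expect any real obstacle here; this is an elementary Jensen/normalization argument. The only points requiring a word of care are: (i) ensuring $\bbE[f(X)]>0$, which holds since $f>0$ everywhere on $[a,b]$ and $X$ is supported there; and (ii) whether $\max$ versus $\sup$ is the right symbol on the right-hand side — this is fine provided $g/f$ attains its supremum of absolute value on $[a,b]$, which in all applications in this paper holds because the relevant functions are continuous on a compact interval. In the application, $f$ and $g$ will be taken as $(x+\lambda)^{y}e^{-(x+\lambda)}$ times polynomial factors, so continuity and positivity of $f$ are immediate.
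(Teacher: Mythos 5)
Your proposal is correct and matches the paper's own proof essentially line for line: the upper bound via $|\bbE[g(X)]|\le \bbE\bigl[\tfrac{|g(X)|}{f(X)}f(X)\bigr]\le M\,\bbE[f(X)]$ with $M=\max_{x\in[a,b]}|g(x)/f(x)|$, and achievability via a point mass at the maximizer. The additional remarks on attainment of the maximum and positivity of $\bbE[f(X)]$ are fine but do not change the argument.
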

 \begin{proof}
     Let 
    \begin{align}
        M :=  \max_{x \in [a,b]} \left| \frac{g(x)}{ f(x)} \right|. 
    \end{align}
    Then,
    \begin{align}
    | \bbE[g(X)] | &\le   \bbE[ |g(X)| ] \\
    & = \bbE\left[ \left| \frac{g(X)}{ f(X)} \right|  |f(X)| \right] \\
    &\le  M  \bbE[ | f(X)| ],
    \end{align}
    therefore,
    \begin{align}
   \sup_{X \in [a,b] }   \frac{ \left |\bbE[g(X)] \right|}{\bbE[ | f(X) |] }   =      \sup_{X \in [a,b] }  \left | \frac{\bbE[g(X)]}{\bbE[  f(X) ] } \right|
        \le  \max_{x \in [a,b]} \left| \frac{g(x)}{ f(x)} \right| \label{eq:moment_ratio_bound}
    \end{align}
    where the first equality holds due to the assumption $f>0$. 

    Note that equality in~\eqref{eq:moment_ratio_bound} is achievable by choosing $X$ to be a point mass concentrated at $x$ that attains the maximum. 
    \end{proof}

\section{Proof of Lemma~\ref{lem:bound_cumulant}} \label{app:lem:bound_cumulant}
For the upper bound, we have
 \begin{align}
 &\frac{\kappa_1(X^\star|Y=y)+\lambda}{\kappa_1(X^\star|Y=y+1)+\lambda} \notag\\
 &= \frac{\expect{(X^\star+\lambda)^{y+1}\rme^{-(X^\star+\lambda)}}^2}{\expect{(X^\star+\lambda)^{y}\rme^{-(X^\star+\lambda)}}\expect{(X^\star+\lambda)^{y+2}\rme^{-(X^\star+\lambda)}}}  \notag \\
 &\le 1 \label{eq:CS_ine_app}
    \end{align}
where \eqref{eq:CS_ine_app} is due to Cauchy-Schwarz inequality applied to the random variables $(X^\star+\lambda)^{\frac{y}{2}}\rme^{-\frac{X^\star+\lambda}{2}}$ and $(X^\star+\lambda)^{\frac{y+2}{2}}\rme^{-\frac{X^\star+\lambda}{2}}$. 

For the lower bound, we first consider the case $\lambda>0$. We have:
\begin{equation*}
\frac{\kappa_1(X^\star|Y=y)+\lambda}{\kappa_1(X^\star|Y=y+1)+\lambda} \hspace{-0.1cm}=  \frac{\expcnd{X^\star}{Y=y}+\lambda}{\expcnd{X^\star}{Y=y+1}+\lambda} \hspace{-0.1cm} \ge \frac{\lambda}{\sfA+\lambda}.
\end{equation*}
If $\lambda=0$, then
\begin{align*}
  \frac{1}{\kappa_1(X^\star|Y=y)} &=  \frac{\expect{(X^\star)^{y}\rme^{-X^\star}}}{\expect{(X^\star)^{y+1}\rme^{-X^\star}}}  \\
  &=\frac{\expcnd{(X^\star)^{y}\rme^{-X^\star}}{X^\star>0}}{\expcnd{(X^\star)^{y+1}\rme^{-X^\star}}{X^\star>0}} \\
  &\le \sup_{P_X:\: X\in[x_{\min},\sfA]} \frac{\expcnd{X^{y}\rme^{-X}}{X\ge x_{\min}}}{\expcnd{X^{y+1}\rme^{-X}}{X\ge x_{\min}}} \\
  &= \max_{x \in [x_{\min},\sfA]} \frac{1}{x} = \frac{1}{x_{\min}}
\end{align*}
where in last step we have used the result of Lemma~\ref{lem:moment_ratio}  and where 
\begin{equation}
    x_{\min} := \min \{\supp(P_{X^\star})\setminus \{0\} \} \ge \rme^{-\sqrt{2(\log(\sfA)-1)}}
\end{equation}
is a result of Proposition~\ref{prop:locations}. Also, we have
\begin{align}
    \kappa_1(X^\star|Y=y+1) &= \frac{\expect{(X^\star)^{y+2}\rme^{-X^\star}}}{\expect{(X^\star)^{y+1}\rme^{-X^\star}}} \\
    &\le \sup_{P_X:\: X\in[0,\sfA]} \frac{\expect{X^{y+2}\rme^{-X}}}{\expect{X^{y+1}\rme^{-X}}} \\
  &= \max_{x \in [0,\sfA]} x = \sfA
\end{align}
where we applied again Lemma~\ref{lem:moment_ratio}. Collecting the partial results, we get that for $\lambda =0$
\begin{equation} \frac{\kappa_1(X^\star|Y=y)}{\kappa_1(X^\star|Y=y+1)} \ge \frac{\rme^{-\sqrt{2(\log(\sfA)-1)}}}{\sfA}.
\end{equation}

 \section{Proof of Proposition~\ref{prop:locations} }
\label{app:prop:locations}

    From~\cite{cao2014capacity}, we know that $\{0,\sfA\}\in\supp(P_{X^\star})$ for all $\sfA>0$ and $\lambda \ge 0$.
Suppose that there exists $x^\star \in \supp(P_{X^\star}) \setminus \{0,\sfA\}$. Then  by the KKT condition in \eqref{lem:KKT}, $x^\star$ is a maximum of $x \mapsto i(x;P_{X^\star})$, hence we have that
\begin{equation}
i'(x^\star; P_{X^\star}) =0, \text{ and }
i''(x^\star; P_{X^\star}) \le 0. 
\end{equation} 
Then, using the above we have that
\begin{align}
 0&\ge i''(x^\star; P_{X^\star}) \\
&= \sum_{y=0}^\infty P_{Y|X}(y|x^\star) \log  \frac{x^\star+\lambda  }{\bbE[ X^\star+\lambda|Y =y+1]  } \nonumber \\
&\quad-i'(x^\star; P_{X^\star})  +\frac{1}{x^\star+\lambda} \label{eq:Using_sec_der_Expre}\\
&= \sum_{y=0}^\infty P_{Y|X}(y|x^\star) \log  \frac{x^\star+\lambda  }{\bbE[ X^\star+\lambda|Y =y+1]  } +\frac{1}{x^\star+\lambda}\\
& \ge    \log  \left( \frac{x^\star+\lambda  }{\sfA +\lambda }  \right) +\frac{1}{x^\star+\lambda} \label{eq:Using_bound_on_ce}
\end{align}
where \eqref{eq:Using_sec_der_Expre} follows from \eqref{eq:second_deriavtive_i_den}; and \eqref{eq:Using_bound_on_ce} follows from the bound $\bbE[X^\star |Y ] \le \sfA $.

We know seek to solve the inequality:
\begin{equation} \label{eq:inequality_to_solve}
0 \ge f(x) 
\end{equation}
where $f(x) =  \log  \left( \frac{x^\star+\lambda  }{\sfA +\lambda }  \right) +\frac{1}{x^\star+\lambda}$. 

Note that function $x\mapsto f(x)$  is decreasing for $x < 1-\lambda$ and increasing for $x > 1-\lambda$. Moreover, $f(x)$ has two zeros provided that $\sfA+\lambda >\rme$ and no zeros if $\sfA+\lambda <\rme$. 

Consequently, if $\sfA+\lambda <\rme$, $f(x)$ has no zeros and is a positive function, which leads to a contradiction. Therefore, for $\sfA+\lambda <\rme$, 
 $|\supp(P_{X^\star})| = 2$. This proves the first part of the result. 

For the second part, we assume that  $|\supp(P_{X^\star})| \ge 3$, which implies that $\sfA+\lambda >\rme$ and the function $x\mapsto f(x)$ has two zeros.  The exact solution to \eqref{eq:inequality_to_solve} is given in terms of Lambert-W functions:
\begin{equation}
  (\sfA +\lambda) \rme^{ W_{-1}(-\frac{1}{\sfA+\lambda} )}  \le   x^\star+\lambda \le  (\sfA +\lambda) \rme^{ W_0(-\frac{1}{\sfA+\lambda} )}.
\end{equation}

We now do further bounding to have  bounds that involve simpler functions. 

We start with a lower bound on $x^\star +\lambda$. By performing the substitution $x^\star+\lambda  =\rme^{t}$ we arrive at
\begin{equation}
f(t) = t -\log(\sfA+\lambda) + \rme^{-t}.
\end{equation}

Since $t \mapsto f(t)$ is decreasing for $t<0$, and noting that $\rme^{-t} \ge 1-t +\frac{t^2}{2}, t<0$, we arrive at
\begin{equation}
    f(t) \ge  -\log(\sfA +\lambda) +1 + \frac{t^2}{2}.
\end{equation}
This implies that the smallest zero of $f(t)$, denoted by  $t_0$,  is lower-bounded by 
\begin{equation}
    t_0  \ge  -\sqrt{2(\log(\sfA+\lambda)-1)}
\end{equation}
or equivalently that 
\begin{equation}
    x^\star  + \lambda \ge \rme^{ - \sqrt{2(\log(\sfA+\lambda)-1)} }. 
\end{equation}

We now provide an upper bound on $x^*+\lambda$. By using lower bound $\log(x) \ge 1-\frac{1}{x}$, we arrive at
\begin{equation}
0 \ge f(x)  \ge  1-\frac{\sfA +\lambda}{x^\star+\lambda} +\frac{1}{x^\star+\lambda}
\end{equation}
which implies that 
\begin{equation}
x^\star +\lambda \le \sfA+\lambda-1
\end{equation} 
This concludes the proof. 



   


\section{Proof of Lemma~\ref{lem:Derivatives_K}} \label{app:lem:Derivatives_K}

    We will need the following three identities \cite{PoissonCEprop}: for $\lambda \ge 0$
    \begin{align}
    \bbE[X|Y =y] &= \frac{ (y+1) P_Y(y+1)}{P_Y(y)} -\lambda, \, y \in \bbN_0  \label{eq:TuringIdentity} \\
   \frac{\rmd}{\rmd x}  \bbE[f(Y)|X=x] &= \bbE[ f(Y+1) -f(Y) | X=x], \, x >0
    \end{align}
    and
    \begin{align}
    &\expcnd{(X+\lambda)^k}{Y=y} \notag\\
    & \quad =  \prod_{i=0}^{k-1}  \expcnd{X+\lambda}{Y=y +k} ,  \, y \in \bbN_0 , k  \in \bbN_0 .\label{eq:ProductCEidentity}
    \end{align}
   
 Next, note that 
    \begin{equation}
    \kl{P_{Y|X}(\cdot| x)}{ P_Y} = G(x) + (x+\lambda) \log (x+\lambda) -(x+\lambda)  \label{eq:KL_as_G}
    \end{equation}
    where 
    \begin{equation}
        G(x) = \sum_{y=0}^\infty P_{Y|X}(y|x) \log  \frac{1}{y! P_{Y}(y)} , \, x \ge 0. 
    \end{equation}

Next, note that
\begin{align}
G'(x) &= \sum_{y=0}^\infty P_{Y|X}(y|x) \log  \frac{ P_{Y}(y)}{ (y+1) P_{Y}(y+1)} \\
&=\sum_{y=0}^\infty P_{Y|X}(y|x) \log  \frac{1}{\bbE[X|Y =y] +\lambda } \label{eq:G'_char}
\end{align}
where in the last step we have used the identity in \eqref{eq:TuringIdentity}.

Second, note that 
\begin{align}
G''(x) &= \sum_{y=0}^\infty P_{Y|X}(y|x) \log  \frac{\bbE[X 
 +\lambda|Y =y]  }{\bbE[X+\lambda|Y =y+1]  }\\
& = \sum_{y=0}^\infty P_{Y|X}(y|x) \log  \frac{\bbE^2[X +\lambda|Y =y]  }{\bbE[ (X+\lambda)^2|Y =y]  } \label{eq:secon_step_G''} \\
& = \sum_{y=0}^\infty P_{Y|X}(y|x) \log  \frac{1  }{\bbE[ X+\lambda|Y =y+1]  } - G'(x), \label{eq:third_step_G''}
\end{align}
where \eqref{eq:secon_step_G''} follows from identity in \eqref{eq:ProductCEidentity}; and \eqref{eq:third_step_G''} follows from \eqref{eq:G'_char}. 

The characterization of $i''(x;P_X)$ follow by combining  \eqref{eq:KL_as_G} and \eqref{eq:G'_char}. 
    \end{appendices}

\bibliographystyle{IEEEtran}
\bibliography{refs.bib}

\newpage 
\clearpage

\end{document}